\documentclass[runningheads]{llncs}

\usepackage[T1]{fontenc}
\usepackage{amsmath,amssymb}
\usepackage{mathtools}
\usepackage{enumitem}
\usepackage{booktabs}
\usepackage{microtype}
\usepackage{dsfont}
\usepackage[hidelinks]{hyperref}

\title{A Repeated-Game Framework for Incentives in Decentralized Infrastructure Protocols}
\titlerunning{Repeated-Game Incentives in Decentralized Infrastructure Protocols}

\author{Mustafa Qazi}
\authorrunning{M. Qazi}
\institute{Volt Capital}

\spnewtheorem{assumption}{Assumption}{\bfseries}{\itshape}

\newcommand{\R}{\mathbb{R}}

\begin{document}
\maketitle

\begin{abstract}
We introduce a repeated dynamic incentive framework for characterizing when compliance, or full-effort honest service provision, is incentive compatible in Decentralized Physical Infrastructure Networks (DePIN). We model quality control in this setting as a repeated moral-hazard problem between the protocol and each provider, where compliance is enforced by both slashing posted collateral and the discounted threat of demotion to probation tiers on a reputation ladder. Our main contribution is the "deterrence ratio" $\Gamma$, the worst-case ratio of a deviation's private gain to its marginal probability of detection. We find that if any profitable deviation does not increase the fail probability relative to compliance, then binary public-outcome protocols cannot deter it. When all profitable deviations have positive detection gaps and no weakly costlier action is less likely to fail, compliance is sequentially incentive compatible if and only if immediate slashing plus discounted reputation loss is at least $\Gamma$ at every reputation tier. We use this condition to formulate a protocol design problem mapping service primitives to stake requirements, reward schedules, probation rules, and audit frequency. 

\keywords{Decentralized physical infrastructure networks \and Repeated games \and Moral hazard \and Reputation \and Slashing}
\end{abstract}

\section{Introduction}\label{intro}
Infrastructure systems typically have three major inputs in producing service output: \textit{capital}, or the hardware used to produce the infrastructure service, \textit{operational effort}, or the ongoing maintenance and provision of capital, and \textit{coordination}, the routing, load-balancing, and pricing layers that unify disjointed production nodes. Centralized operators control all of these inputs. Decentralized Physical Infrastructure Networks (DePIN) \cite{lin2024} instead outsource the provider-side inputs to individual \textit{providers} who bring their own capital and provision service in exchange for protocol-specified rewards. This model enables elastic supply without large upfront investment, but the loss of control over production nodes raises the problem at the focus of this work, quality control: how can the protocol guarantee that provider work is (a) actually being done and (b) at or above some threshold of quality? For example, a prevalent issue in DePINs is \textit{self-dealing}, where providers purchase or spoof the purchase of their own service to profit from token emissions. Even more detrimental to the network is the practice of spoofing signals that resemble actual work done without doing any work at all. In practice, disincentivizing or blocking this behavior is highly dependent on the type of service being provisioned: spoofing location data is structurally different from spoofing AI inference workloads. As a result, incentive-compatible mechanisms cannot generally be designed for \textit{all} DePINs. 

Our work takes a different approach in light of this reality: we quantify the difficulty of enforcing full-effort provider service provisioning as the \textit{deterrence ratio}, a quantity that captures the binding deviation and network state as the worst-case scenario. Our work builds on the single-shot model of Milionis et al. \cite{milionis2025}, a more general study of incentive-compatible signal recovery when signals are manipulable. We take their signal-recovery framework as a foundation for the audit layer and study the dynamic enforcement problem that remains once reports are aggregated on the protocol side.

Current DePIN designs use ad hoc mixes of posted collateral slashing, auditors, and provider reputation tracking to address quality control. Our work formalizes this problem as a repeated moral-hazard game between providers and a committed public protocol, deriving necessary and sufficient conditions under which compliance is sequentially incentive compatible. The repeated game framework is essential because it reflects the two enforcement channels available to protocol designers. In a one-shot setting, the only deterrent is immediate punishment, which must be sized to cover the worst-case deviation. In the repeated game, our model can leverage the threat of reduced future rewards. The repeated-game setting also better reflects the decisions and actions of providers in real-world protocols: providers condition their behavior on past outcomes and have different temptations in different network conditions. Providers facing high costs of compliant behaviors in one state may choose to deviate even if compliance is optimal on average. Our state-conditional analysis addresses this concern.

\subsection{Related Work}
Our work is most closely related to that of Milionis et al. \cite{milionis2025}, who introduce the first formal treatment of eliciting signals from self-interested sources that may manipulate signals adversarially. They characterize a \textit{source identifiability} condition, a necessary and sufficient condition for a strictly truthful elicitation mechanism, constructed via a family of mechanisms in the peer-prediction literature \cite{mrz2005}. Our separation condition (Def. \ref{StateConditionalAuditSeparation}) rests on this foundation. Their work is a generalized framework in a single-shot model, from which we take their elicitation guarantees and build a repeated-game framework on top, specific to a DePIN environment. Our contribution begins after their reporting layer, investigating when the resulting monitoring technology is strong enough to support dynamic provider incentives through slashing and reputation. Our detection primitives are the reduced form of a costly-state-falsification monitoring layer (Appendix \ref{app: foundationDetectionPrimitives}) in the tradition of Lacker and Weinberg \cite{lackerweinberg89}, whose test-design branch \cite{perezrichetskreta22} studies how much detection power a test retains under optimal manipulation. The work of \cite{crystalgorenkominers2025} proves incentive-compatibility for the Shelby protocol in a one-shot model. Their results show that on-chain audits and peer audits restore IC over purely off-chain audits. Our framework can be thought of as a generalized form of their results with the added power of reputation. 

Our recursive incentive constraints are closely related to repeated games with imperfect public monitoring \cite{aps1990,flm1994}, in which public signals and continuation values support incentive-compatible behavior. Our protocol, however, commits ex ante to a public automaton that applies mechanical rewards, slashing, and reputation transitions as a function of public outcomes, so we do not characterize the full set of perfect public equilibrium payoffs and instead characterize when compliance is sequentially incentive compatible under a fixed protocol. Classical repeated principal-agent works \cite{radner1985} are conceptually related in studying how noisy public signals trigger future punishments. Other models \cite{greenporter1984} develop the same logic in repeated oligopoly as opposed to our agency setting. 

Our modeling at the stage game level is a repeated moral-hazard problem in the sense of Holmstrom \cite{holmstrom1979}, where effort is unobservable and can only be inferred from noisy signals. Repetition matters because the protocol can discipline current deviations through both immediate punishment and the threat of future demotion in reputation. Our key result, the deterrence ratio $\Gamma$ (\ref{def:deterrence-ratios}) isolates the \textit{state-conditional} cost saving from deviation against the marginal detection ability of the audit. The resulting incentive inequality is a dynamic analogue of the classical deterrence condition of Becker \cite{becker68,polinskyshavell00}, with the sanction endogenized as slashing plus continuation-value loss and the gain and detection probability endogenized through adversarial manipulation. Our use of a reputation channel is most analogous to the no-shirking condition of Shapiro and Stiglitz \cite{shapirostiglitz1984}, in which the threat of losing a rent-bearing position deters shirking under imperfect monitoring. We generate rent through a reputation ladder with stake slashing-backed demotion to lower tiers. 

While our work is theoretical, it endogenizes primitives shared by deployed protocols: Filecoin \cite{filecoin2017} slashes pledged collateral on failed storage proofs, and Helium \cite{helium2018} verifies claimed wireless coverage through mutual hotspot witnessing. Neither whitepaper derives conditions for sustained compliant behavior, and both networks have suffered substantial location-spoofing and self-dealing attacks, which our framework targets and quantifies.

\section{Model}\label{model}

\subsection{Players and Information Structure}\label{PlayersAndInformation}
Time is discrete, $t = 1, 2, \dots$, with common discount factor $\delta \in (0, 1)$. The strategic players are a finite set of \textit{providers} indexed $i \in \mathcal{N} = \{1, \dots, N \}$. The protocol is a committed public automaton that maps public outcomes to rewards, slashing, and reputation transitions. We assume that these parameters are fixed by the protocol designer and do not change strategically during the game. We define the collection of stake requirements, slashing rules, rewards, reputation transition rules, and lockup rules as a committed protocol $\Pi = (S, \lambda, r, T, \Lambda_{\text{lock}})$, a non-strategic automaton whose components are introduced formally below. We fix $\Pi$ for Sections 2--4 and study the designer's choice of $\Pi$ in Section 5. 

Auditors are part of the monitoring technology. They observe provider signals and submit reports, assumed truthful as generated by an incentive-compatible recovery mechanism similar to Milionis et al. \cite{milionis2025}. Each period, the protocol's audit process yields a binary public outcome $z_t^i \in \mathcal{Z} = \{0, 1\}$, where $z=1$ denotes a fail and $z=0$ denotes a pass.

In every period, provider $i$ has a privately observed state $x_t^i$ drawn independently from a distribution $\mu$ over a finite set $\mathcal{X} = \{x_1, x_2, \dots, x_M\}$ . The state captures exogenous conditions outside of the provider's control like network health, hardware health, service demand, and other similar factors. 

Let the provider action space be a finite set $\mathcal{A}$. We denote $a^* \in \mathcal{A}$ as \textit{compliance}, or honest full-effort service provision. All other elements of $\mathcal{A}$ represent deviations of any kind: shirking, self-dealing, signal manipulation, etc. Conditional on $x_t^i$, provider $i$ chooses an action $a_t^i \in \mathcal{A}$.

\subsection{Public State and Player Histories}\label{PublicStateHistories}
A convenient feature of repeated games on blockchains is that punishments do not have to be sequentially rational. Whereas the punishing principal in classical repeated principal-agent games \cite{radner1985} must find it optimal to carry out punishments, smart contracts execute slashing and reputation updates mechanically as protocol code conditional on public outcomes, eliminating the credibility problem entirely. 

\subsubsection{Stake Requirements}\label{StakeRequirements}

Upon entry into the protocol, each provider must post a \textit{minimum stake collateral} $\underline{S} \in [0, \bar{S}]$, where $\bar{S} <\infty$ is the maximum stake\footnote{Our analysis is agnostic to the choice of numeraire for stake collateral.}. The stake is locked for $\Lambda_{\text{lock}} \geq 1$ periods after any withdrawal request. Upon a fail result ($z_t^i = 1$), the protocol slashes a fraction $\lambda \in (0, 1]$ of the stake. In the event of slashing, the provider is barred from service provision until they replenish stake to $\underline{S}$. We write $S = \underline{S}$ throughout and assume stake is restored to $S$ at the start of each active period. Replenishment is immediate, with no constraints on liquidity or downtime. $S$ is then constant across active periods even when slashing is incurred.  
 
Each provider is assigned a \textit{reputation} $\rho_t^i$ from a finite set of reputation tiers $\mathcal{R}_n = \{G, P_1, \dots, P_n\}$, where $G$ denotes good standing and $P_1, \dots, P_n$ are probation tiers\footnote{Stake replenishment is also required for access to probation rewards. No rewards are given for undercollateralized providers in any case.}. The parameter $n$ is a protocol design choice representing the number of passed public outcomes required for reinstatement to good standing. Given the public outcome, reputation evolves deterministically by a transition function $T: \mathcal{R}_n \times \mathcal{Z} \to \mathcal{R}_n$ defined below: 

\begin{equation}
    T(\rho, z) = 
    \begin{cases} 
    G & \text{if } \rho = G \text{ and } z = 0  \\ 
    P_{j+1} & \text{if } \rho = P_j, \: j < n, \text{ and } z = 0\\
    G & \text{if } \rho = P_n \text{ and } z = 0\\
    P_1 & \text{if } z = 1 
    \end{cases}
    \label{BinaryTransitionFn}
\end{equation}
The reward function (defined below) assigns $r(G) = r_G$ and $r(P_j) = r_j$ for $j = 1, \dots, n$ with $r_1 \leq r_2 \leq \dots \leq r_n \leq r_G$. Protocol rewards depend only on the reputation tier, not on the service state or action. Provider $i$'s public history at time $t$ is then $h_t^i = (\rho_1^i, z_1^i, \dots, \rho_{t-1}^i, z_{t-1}^i, \rho_t^i)$.

\subsubsection{Timing and the Committed Protocol}\label{Timing}

Under a fixed protocol $\Pi$, play in each period proceeds as follows. The tier $\rho_t$ is public; the private state $x_t$ is drawn and the provider chooses $a_t$. The audit layer observes signals and elicits reports. Conditional on truthfulness of the audit layer, the protocol aggregates reports into $z_t$. Rewards $r(\rho_t)$ are paid, and slashing and probation transitions execute on fail outcomes. Exit and re-entry decisions are made at the period boundary before $x_{t+1}$.

\subsection{Payoffs}\label{Payoffs}

Given our players and information structures, we now model stage payoffs. Let the per-period service reward be  $r: \mathcal{R}_n \to \R_+$ and the cost\footnote{Note that the cost function implies that compliance \textit{must} incur a positive cost, a mild assumption in most DePIN protocols.} of action $a$ in state $x$ be $c: \mathcal{A} \times \mathcal{X} \to \mathbb{R}_+$ where $c(a^*, x) > 0 \;\: \forall \: x$. Let $b: \mathcal{A} \times \mathcal{X} \to \mathbb{R}_{+}$ denote the gross private benefits dependent on the action, which captures revenue a deviation may generate beyond protocol rewards. Let $\kappa > 0$ be the per-period opportunity cost of locked capital. 
The stage payoff for provider $i$ in period $t$ is:
\[u_t^i = b(a_t^i, x_t^i) - c(a_t^i, x_t^i) + r(\rho_t^i) - \lambda S \cdot \mathds{1}[z_t^i = 1] - \kappa S]\]
where $\mathds{1}[z_t^i = 1]$ represents the indicator function and $\lambda S$ is the slashing penalty. The terms $r(\rho)$ and $\kappa S$ are tier-dependent and constant respectively. $\kappa S$ is constant at every tier and action and cancels from all incentive comparisons. We supress it in the value function, reintroducing it in the participation analysis of section \ref{ParticipationConstraint}. In the repeated game, provider $i$ maximizes:
\[\mathbb{E} \left[\sum_{t=1}^\infty \delta^{t-1}u_t^i\right]\]
We further formalize this object as the continuation value function in Section~4. 

\begin{definition}[Compliance] \label{def:compliance}
    Fix a committed protocol $\Pi$. The protocol implements compliance if, after every public history summarized by $\rho$ and private state $x$, the provider weakly prefers compliance $a^*$ to every deviation given the continuation values induced by $\Pi$. 
\end{definition}
The provider's incentive to deviate depends on both the current state and the nature of the deviation. Following the moral hazard literature \cite{holmstrom1979}, the key quantity is the total private gain from deviating over complying. For a deviation $d \in \mathcal{A} \setminus \{a^*\}$, we define the \textit{state-conditional gain}
\[\Delta_d(x) = \bigl[b(d,x) - c(d,x)\bigr] - \bigl[b(a^*,x) - c(a^*,x)\bigr],\]
which reduces to the cost saving $c(a^*, x) - c(d, x)$ when the deviation generates no new revenue. Gains whose magnitude depends on the provider's own reputation tier are outside the model. Since $\mathcal{A}$ is an arbitrary finite set, a deviation bundled with a particular costly signal manipulation is itself an element of $\mathcal{A}$; the costless-manipulation case is recovered when all such bundles share a common cost. The set of \textit{state-relevant profitable deviations} is:
\begin{equation}
    \mathcal{D}(x) = \{d \in \mathcal{A} \setminus \{a^*\} \mid \Delta_d(x) > 0\}
    \label{eq:ProfitableDeviations}
\end{equation}
Crucially, $\mathcal{D}(x)$ depends on the state: a deviation profitable in one state may be unprofitable in another, so incentive compatibility must hold pairwise across every $(d, x)$ with $\Delta_d(x) > 0$. We then require the following assumption:
\begin{assumption}[Non-Trivial Incentive Problem]\label{NontrivialAssumption}
    There exist $x \in \operatorname{supp}(\mu)$ and $d \in \mathcal{A} \setminus \{a^*\}$ with $\Delta_d(x) > 0$. 
\end{assumption}
\section{State-Conditional Separation}\label{StateConditionalSeparation}

\subsection{Public Monitoring and State-Conditional Detection}\label{FullRank}

The protocol is separated into two layers: the audit layer, which observes provider signals and delivers reports conditional on the auditing mechanism being incentive-compatible, and the enforcement layer, which compresses these reports into a public outcome and applies the reward, slashing, and reputation logic. From the provider's perspective, only the enforcement layer affects dynamic incentives: after choosing an action, the provider faces a distribution over public outcomes, which determines current penalties and future reputation. 

Our baseline case is binary public outcomes. The coarseness of the pass and fail outcomes should be seen as a deliberate choice. On-chain logic will typically not discern between different manipulation types or deviation types, which means that it cannot condition punishments on the type of deviation. This is loosely related to the \textit{full-rank condition} of Fudenberg et al. \cite{flm1994}, in which it is required that outcome distributions induced by different player actions be linearly independent. We do not impose this assumption because it is highly unrealistic for real-world DePIN protocols, as even the most ambitious protocol designer will fail to enumerate the set of actions available to providers when setting corresponding punishments. 

The relevant monitoring question is therefore not whether the audit layer identifies an action, but whether each profitable deviation makes punishment sufficiently more likely than it is under compliance. If a deviation saves cost without increasing the probability of a fail outcome, then the protocol's binary punishment is ineffective. If in the worst case the deviation lowers the fail probability compared to compliance, then the provider both saves cost and becomes less likely to be punished than compliant providers. This motivates our concept of a \textit{detection gap}. For each private state $x$, compliance induces a fail probability $\eta(x)$. A profitable deviation $d \in \mathcal{D}(x)$, paired with the provider's best feasible manipulation, creates a worst-case probability. We formalize this in the sections below.

\subsection{Separation and Detection}\label{SeparationDetection}
We take the monitoring technology as a primitive, with a more rigorous derivation in the appendix section \ref{app: foundationDetectionPrimitives}. For each $x \in \operatorname{supp}(\mu)$ and each deviation $d$, the audit process fails a compliant provider with probability\footnote{We maintain that audits are neither perfect or completely uninformative under compliance.} $\eta(x) \in (0, 1)$ and a deviating provider with probability $P_d(x)$, where $P_d(x)$ is the fail probability under a provider's best available manipulation of signals. 


\begin{definition}[State-Conditional Audit Separation]\label{StateConditionalAuditSeparation}
Define the detection gap $\varepsilon_d(x) := P_d(x) - \eta(x)$ for each $x \in \operatorname{supp}(x)$ and $d \in \mathcal{A} \setminus \{a^*\}$. The audit mechanism satisfies \textit{state-conditional audit separation} if $\varepsilon_d(x) > 0$ for every $x \in \operatorname{supp}(\mu)$ and $d \in \mathcal{D}(x)$.
\end{definition}

Separation requires that deviation increases fail probability even under optimal manipulation. This rules out deviations where the provider simultaneously saves costs and reduces their measured failure rate. When such deviations exist, compliance cannot be implemented under binary public outcomes under any reward or collateral scheme.

We additionally assume that no deviation is both costlier than compliance and less likely to fail.
\begin{assumption}[Compliance Signal-Optimality]\label{OptimalComplianceAssumption}
For every $x \in \operatorname{supp}(\mu)$ and every $d \in \mathcal{A} \setminus \{a^*\}$ with $\Delta_d(x) \leq 0$, we have $\varepsilon_d(x) \geq 0$: no action weakly costlier than compliance
attains a strictly lower fail probability, even under optimal manipulation.
\end{assumption}
Absent this assumption, large punishments can make costlier yet safer actions attractive, which mirrors the over-compliance phenomenon of Craswell and Calfee \cite{craswellcalfee86}. The incentive condition becomes two-sided: the maximal tier enforcement $\lambda S + \delta \max_{\rho} \Phi_{\rho}$ would also be bounded above by $\min \Delta_d(x)/\varepsilon_d(x)$ over such actions.

\begin{proposition}[Impossibility Without Separation] \label{ImpossibilityWithoutSeparation}
Consider committed protocols in which the rewards, slashing, and continuation values condition only on the binary public outcome $z \in \{0, 1\}$, where $z = 1$ weakly reduces the provider's current or continuation payoff relative to $z=0$. If there exists $x \in \operatorname{supp} (\mu)$ and $d \in \mathcal{D}(x)$ such that $\varepsilon_d(x) \leq 0$, then no such binary public-outcome protocol implements compliance against deviation $d$ in state $x$. 

\end{proposition}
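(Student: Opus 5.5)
Fix a protocol of the stated class, a public history summarized by tier $\rho$, and the state $x$ and deviation $d \in \mathcal{D}(x)$ with $\varepsilon_d(x) \le 0$. Because the protocol conditions only on $z$, the provider's total payoff from taking action $a$ today and then following the continuation can be written as
\[
U_\rho(a,x) = b(a,x) - c(a,x) + r(\rho) + (1-q_a)\,W_\rho(0) + q_a\,W_\rho(1),
\]
where $q_a$ is the fail probability induced by $a$ (so $q_{a^*} = \eta(x)$ and $q_d = P_d(x)$). Here $W_\rho(z)$ collects the outcome-contingent part of today's payoff and the discounted continuation value at $T(\rho,z)$. Under the committed automaton this is $W_\rho(0) = \delta V(T(\rho,0))$ and $W_\rho(1) = -\lambda S + \delta V(P_1)$. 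I will deliberately stay at this level of generality, since the statement ranges over all binary public-outcome protocols. The only structural fact I use is the hypothesis that a fail weakly hurts, i.e.\ $\Phi_\rho := W_\rho(0) - W_\rho(1) \ge 0$.

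Subtracting the two expressions gives the key identity
\[
U_\rho(d,x) - U_\rho(a^*,x) = \Delta_d(x) - \bigl(P_d(x) - \eta(x)\bigr)\Phi_\rho = \Delta_d(x) - \varepsilon_d(x)\,\Phi_\rho .
\]
Since $d \in \mathcal{D}(x)$, we have $\Delta_d(x) > 0$. We also have $\varepsilon_d(x) \le 0$ and $\Phi_\rho \ge 0$, so $-\varepsilon_d(x)\Phi_\rho \ge 0$. The right-hand side is therefore strictly positive: the provider strictly prefers $d$ to $a^*$ at $(\rho,x)$. This violates Definition~\ref{def:compliance}, which requires compliance to be weakly preferred after every history. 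It also means no choice of $r$, $\lambda$, $S$, $n$ or transition rule can help, because every such choice enters only through $\Phi_\rho$ and the sign argument holds for any $\Phi_\rho \ge 0$. Since $x \in \operatorname{supp}(\mu)$, the state is reached with positive probability, so the failure is not vacuous. I will also note that the tier $\rho$ can be taken to be any reachable tier, for instance $G$ at entry.

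The main point needing care is justifying the one-shot-deviation decomposition. Continuation values must not depend on today's action except through $z$. This holds because the automaton's transitions, rewards and slashing are functions of $(\rho, z)$ alone, and because $x_{t+1}$ is drawn i.i.d.\ independently of $a_t$. Hence $V(\cdot)$ evaluated at the next tier, under any fixed continuation strategy, is the same whether $a^*$ or $d$ was played. I would state this explicitly, noting that with $\delta<1$ and bounded stage payoffs (finite $\mathcal{A},\mathcal{X}$, $S \le \bar S$) the values are well defined. I would also remark that the argument is unchanged if exit options are added, since exit decisions are made at the period boundary and enter $W_\rho(z)$ in the same way for both actions.
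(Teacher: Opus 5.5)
Your proof is correct and is essentially the argument the paper relies on. The paper gives no separate proof of this proposition, but the payoff-loss expression \eqref{eq:ExpectedPayoffDifference} in the proof of Lemma~\ref{MainIncentiveLemma}, $-\Delta_d(x) + \varepsilon_d(x)[\lambda S + \delta\Phi_\rho]$, is strictly negative when $\Delta_d(x) > 0$, $\varepsilon_d(x) \le 0$ and the bracket is nonnegative; that is your identity specialized to the committed automaton, and your general $W_\rho(z)$ form matches the proposition's quantification over all binary public-outcome protocols slightly better. One notational caution: your $\Phi_\rho$ is the paper's $\lambda S + \delta\Phi_\rho$, not its continuation punishment $\Phi_\rho = V(T(\rho,0)) - V(T(\rho,1))$.
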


Note that this proposition is specific to mechanisms conditioning on binary outcomes. Richer structures of $\mathcal{Z}$ or other methods of out-of-protocol verification may deter deviations that are otherwise undetectable under binary aggregation.

\subsection{Deterrence Ratio}\label{DeterrenceRatio}

The separation condition gives us a detection gap $\varepsilon_d(x)$ for each state--deviation pair. The relevant incentive question is whether this gap is large enough relative to the temptation gap $\Delta_d(x)$.
\begin{definition}[Deterrence Ratios] \label{def:deterrence-ratios}
    For each $x \in \operatorname{supp}(\mu)$ and $d \in \mathcal{D}(x)$, the deterrence ratio is $\gamma_d(x) := \Delta_d(x)/\varepsilon_d(x)$, and the global deterrence ratio is
    \[\Gamma := \max_{x \in \operatorname{supp}(\mu),\, d \in \mathcal{D}(x)} \gamma_d(x)\]
\end{definition}
The global ratio captures the binding state--deviation pair: the combination hardest to deter. 

\section{Incentive Constraints and Dynamic Implementation}\label{IncentiveConstraintsAndResults}
Section 3 established that binary outcomes carry information in the form of detection gaps for state--deviation pairs and compliance, summarized in the global deterrence ratio $\Gamma$. We now analyze whether this information, combined with protocol enforcement tools of slashing and reputation, is sufficient to make compliance incentive compatible in general. We show that the answer reduces to a single inequality. 

\subsection{Main Incentive Inequality}
Fix a period $t$ and suppose state-conditional audit separation holds for the audit mechanism. Let $V: \mathcal{R}_n \to \mathbb{R}$ denote the continuation value under compliant play where $V(\rho)$ is the ex ante expected discounted payoff from the current period onward for a compliant provider in tier $\rho$. Define the \textit{continuation punishment} $\Phi_{\rho} = V(T(\rho, 0)) - V(T(\rho, 1))$ as the value lost from a fail outcome relative to a pass.
\begin{lemma}[Main Incentive Inequality]\label{MainIncentiveLemma}
    Fix a committed protocol $\Pi$ and continuation values $V$ and suppose a fail outcome weakly reduces the provider's total payoff at every tier. Compliance is sequentially incentive compatible against all deviations $d \in \mathcal{D}(x)$ if and only if, after every reputation tier $\rho$ and private state $x$:
    \begin{equation}
        \lambda S + \delta \min_{\rho \in \mathcal{R}_n}\Phi_{\rho} \geq \Gamma
    \label{eq:MainInequality}
    \end{equation}
\end{lemma}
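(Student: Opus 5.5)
The plan is to reduce sequential incentive compatibility to a family of one-period comparisons using the one-shot deviation principle, and then show that each comparison collapses to the deterrence-ratio inequality.

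\textbf{Reduction to one-shot deviations.} Payoffs are uniformly bounded: $r$ takes finitely many values, $\lambda S \le \bar S$, and $\mathcal{A},\mathcal{X}$ are finite. With $\delta<1$ the game is therefore continuous at infinity, so compliance is sequentially incentive compatible if and only if no single-period deviation followed by a return to compliance is profitable. Because $x_t$ is drawn i.i.d.\ from $\mu$ and independently of the tier, the compliant continuation value from next period onward is the ex ante value $V(T(\rho,z))$. This depends only on the public outcome, not on the current private state or action. One technical point needs care here: the provider's private history carries no payoff-relevant information beyond $\rho$. This follows from the i.i.d.\ state process together with rewards and transitions that depend only on $(\rho,z)$. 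I expect this justification of the one-shot principle with private states to be the main obstacle, and I would address it by first showing that $\rho$ is a sufficient statistic.

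\textbf{Comparing compliance with a one-period deviation.} Fix a tier $\rho$ and a state $x$. Taking action $a$ with fail probability $p_a$ yields
\[
b(a,x)-c(a,x)+r(\rho)-\lambda S\,p_a+\delta\bigl[(1-p_a)V(T(\rho,0))+p_a V(T(\rho,1))\bigr].
\]
Subtract the value of deviation $d$ (with $p_d=P_d(x)$) from that of $a^*$ (with $p_{a^*}=\eta(x)$). The term $r(\rho)$ cancels, as do the suppressed $\kappa S$ terms, leaving
\[
-\Delta_d(x)+\varepsilon_d(x)\bigl(\lambda S+\delta\Phi_\rho\bigr).
\]
Compliance is therefore weakly preferred to $d$ at $(\rho,x)$ if and only if $\varepsilon_d(x)(\lambda S+\delta\Phi_\rho)\ge \Delta_d(x)$.

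\textbf{Case split on the deviation.} If $d\in\mathcal{D}(x)$, then $\varepsilon_d(x)>0$ by state-conditional audit separation (Definition~\ref{StateConditionalAuditSeparation}), so we may divide. The condition becomes $\lambda S+\delta\Phi_\rho\ge\gamma_d(x)$. If instead $\Delta_d(x)\le 0$, Assumption~\ref{OptimalComplianceAssumption} gives $\varepsilon_d(x)\ge 0$. The hypothesis that a fail weakly reduces total payoff gives $\lambda S+\delta\Phi_\rho\ge 0$. Hence the left side is nonnegative while $\Delta_d(x)\le0$, and these deviations are never profitable.

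\textbf{Aggregating over states and tiers.} Every $x\in\operatorname{supp}(\mu)$ can arise at every tier, and the gain $\Delta_d(x)$ does not depend on $\rho$. Incentive compatibility therefore holds at all $(\rho,x)$ if and only if $\lambda S+\delta\Phi_\rho\ge \max_{x,d\in\mathcal{D}(x)}\gamma_d(x)=\Gamma$ for every $\rho$. This is equivalent to $\lambda S+\delta\min_\rho\Phi_\rho\ge\Gamma$, which is inequality \eqref{eq:MainInequality}.

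For the converse direction, suppose the inequality fails. Then there is a tier $\rho$ and a pair $(x,d)$ with $d\in\mathcal{D}(x)$ and $\lambda S+\delta\Phi_\rho<\gamma_d(x)$. Every tier is reachable along the compliant path, since $\eta(x)\in(0,1)$ makes both outcomes occur with positive probability. Hence $(\rho,x)$ arises after some public history, and the one-shot deviation to $d$ there is strictly profitable, so compliance is not sequentially incentive compatible.
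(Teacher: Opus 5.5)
Your proposal is correct and matches the paper's proof: it uses the same one-period payoff difference $-\Delta_d(x)+\varepsilon_d(x)\bigl(\lambda S+\delta\Phi_\rho\bigr)$, divides by $\varepsilon_d(x)>0$ from separation, then maximizes over state--deviation pairs and minimizes over tiers. Your one-shot deviation reduction and your treatment of deviations with $\Delta_d(x)\le 0$ via Assumption~\ref{OptimalComplianceAssumption} appear in the paper's proof of the main Theorem (Compliance Under Slashing and Reputation), not the lemma; the second step is unnecessary here, since the lemma concerns only $d\in\mathcal{D}(x)$ and does not assume Assumption~\ref{OptimalComplianceAssumption}.
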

\begin{proof}
    Consider a provider in state $x$ contemplating deviation $d \in \mathcal{D}(x)$. Under compliance, the provider pays cost $c(a^*, x)$ and faces false-positive detection probability $\eta(x)$. Under deviation $d$, the provider pays $c(d, x)$ and faces fail probability greater than or equal to $\eta(x) + \varepsilon_d(x)$. The expected payoff loss from deviating rather than complying is: 
    \begin{equation}
        [c(d, x) - c(a^*, x)] + \varepsilon_d(x) \cdot [\lambda S + \delta\Phi_{\rho}]
        = -\Delta_d(x) + \varepsilon_d(x) \cdot [\lambda S + \delta \Phi_{\rho}]
        \label{eq:ExpectedPayoffDifference}
    \end{equation}
    Deviation is then deterred if and only if \eqref{eq:ExpectedPayoffDifference} is nonnegative. Rearranging and dividing by $\varepsilon_d(x)$: 
    \[\lambda S + \delta \Phi_{\rho} \geq \frac{\Delta_d(x)}{\varepsilon_d(x)} \quad \forall x \in \operatorname{supp}(\mu), \:\: \forall d \in \mathcal{D}(x)\]
    Since the left side does not depend on $d$ or $x$, the pointwise condition holds for binding pairs if and only if it holds for the maximizing pair. Since we require this to hold at every tier, the binding constraint is the tier with the smallest $\Phi_{\rho}$. Taking the minimum over tiers and maximum over state--deviation pairs:
    \[\lambda S + \delta \min_{\rho \in \mathcal{R}_n}\Phi_{\rho} \geq \Gamma\]
    \qed
    
\end{proof}

\subsection{The One-Shot Game}
In the one-shot game, providers have no reputation or future values. A provider in state $x$ chooses compliance $a^*$ or a deviation $d$ by comparing: 
\[[r - c(a^*, x) - \eta(x)\lambda S] - [r - c(d, x) - (\eta(x) + \varepsilon_d(x))\lambda S] = -\Delta_d(x) + \varepsilon_d(x) \lambda S\]

Setting this difference nonnegative and rearranging gives $\lambda S \geq \Delta_d(x)/\varepsilon_d(x) = \gamma_d(x)$. Taking the max over all binding pairs, compliance is preferred if and only if $\lambda S \geq \Gamma$, which is exactly \eqref{eq:MainInequality} with $\delta \Phi = 0$, or the staking-dominated regime. For $d \notin \mathcal{D}(x)$, Assumption~\ref{OptimalComplianceAssumption} gives $\varepsilon_d(x) \geq 0$, so the difference above is nonnegative at any stake level and such deviations are never preferred.

The one-shot designer thus relies on collateral alone, which grows massively when deviations are subtle and $\Gamma$ is large; this motivates the repeated game, where reputation can substitute for much or all of the economic cost of required collateral.

\subsection{Continuation Punishment}\label{ContinuationPunishments}
Our main incentive inequality \eqref{eq:MainInequality} depends on $\min_{\rho} \Phi_{\rho}$, which is determined by the reputation structure and reward schedule. We now derive this quantity from the primitives of our model. 

Let $\bar{c} = \sum_x \mu(x) \cdot c(a^*, x)$ denote the expected cost of compliance and $\eta = \sum_x \mu(x)\eta(x)$ denote the average false positive rate. Under compliant play, the continuation values satisfy the following Bellman system: 
\begin{equation}
    \begin{aligned}
    &V_G = r_G - \bar{c} - \eta \lambda S + \delta[(1-\eta)V_G + \eta V_{P_1}]& \\
    &V_{P_j} = r_j - \bar{c} - \eta \lambda S + \delta[(1 - \eta)V_{P_{j+1}} + \eta V_{P_1}]\quad \text{ for } \: j < n& \\
    &V_{P_n} = r_n - \bar{c} - \eta \lambda S + \delta [(1 - \eta)V_G + \eta V_{P_1}]&
    \end{aligned}
    \label{eq:BellmanSystem}
\end{equation}
These equations are computed under compliant play, where the provider chooses $a^*$ in every state. Since the provider has not observed future $x$ when the continuation value is evaluated, and $x_t$ is i.i.d., the expected per-period cost and false positive rates are $\bar{c}$ and $\eta$ respectively. 

\begin{lemma}[Existence and Uniqueness of Continuation Values] \label{EUContinuationValuesLemma}
The Bellman system \eqref{eq:BellmanSystem} has a unique solution $\{V(\rho)\}_{\rho \in \mathcal{R}_n}$ with all finite values.
\end{lemma}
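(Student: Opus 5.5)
We write the Bellman system \eqref{eq:BellmanSystem} as a linear fixed-point equation on $\mathbb{R}^{n+1}$ and then apply the contraction mapping theorem. Order the tiers as $(G, P_1, \dots, P_n)$ and set $u(\rho) := r(\rho) - \bar{c} - \eta\lambda S$ for the per-period flow payoff under compliant play. Each $u(\rho)$ is finite because $\mathcal{X}$ is finite, $r$ takes finitely many real values, and $S \le \bar{S} < \infty$. Let $Q$ be the $(n+1)\times(n+1)$ transition matrix of the tier process under compliance. Row $\rho$ places mass $1-\eta$ on $T(\rho,0)$ and mass $\eta$ on $T(\rho,1) = P_1$, where $T$ is the transition function \eqref{BinaryTransitionFn}. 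The system then reads $V = u + \delta Q V$, and we define the operator $\mathcal{B}(V) := u + \delta Q V$ on $(\mathbb{R}^{n+1}, \|\cdot\|_\infty)$.

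The first step is to check that $Q$ is row-stochastic: its entries are nonnegative and each row sums to $(1-\eta) + \eta = 1$. When $T(\rho,0) = P_1$, which happens only for $\rho = P_n$ with $n=1$ or similar degenerate cases, the two masses land on the same coordinate, and the row still sums to one. Consequently $\|QW\|_\infty \le \|W\|_\infty$ for every $W$, so
\[
\|\mathcal{B}(V) - \mathcal{B}(W)\|_\infty = \delta \|Q(V - W)\|_\infty \le \delta \|V - W\|_\infty .
\]
Since $\delta \in (0,1)$, $\mathcal{B}$ is a contraction on a complete metric space. By Banach's fixed-point theorem it has exactly one fixed point, and that fixed point is a vector of finite reals.

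An equivalent and more explicit route is linear algebra. The matrix $I - \delta Q$ is strictly diagonally dominant, because each row has diagonal entry at least $1 - \delta Q_{\rho\rho}$ and off-diagonal absolute sum at most $\delta(1 - Q_{\rho\rho})$, so $I - \delta Q$ is invertible. Alternatively, the spectral radius of $\delta Q$ is at most $\delta < 1$, which gives the Neumann series
\[
V = (I - \delta Q)^{-1} u = \sum_{k\ge 0} \delta^k Q^k u .
\]
This series converges absolutely with $\|V\|_\infty \le \|u\|_\infty/(1-\delta)$. It also identifies $V(\rho)$ with the expected discounted payoff of a compliant provider starting from tier $\rho$, which confirms that this solution is the continuation value defined in the text.

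No step is a real obstacle. The only care needed is to confirm that every tier's row of $Q$ is a probability vector even when $T(\rho,0)$ and $T(\rho,1)$ coincide, and that $\eta \in (0,1)$ is the average of the $\eta(x) \in (0,1)$. The latter holds trivially, and the argument does not actually need $\eta$ to lie strictly inside $(0,1)$.
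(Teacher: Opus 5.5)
Your proof is correct and is essentially the paper's argument: the paper also bounds $|(1-\eta)(V_{T(\rho,0)}-V'_{T(\rho,0)}) + \eta(V_{P_1}-V'_{P_1})|$ by $\|V-V'\|_\infty$ to get a sup-norm contraction with modulus $\delta$ and then applies Banach's fixed-point theorem, and your Neumann-series route is a harmless extra. One small slip: $T(\rho,0)$ is never $P_1$ (with $n=1$ you get $T(P_1,0)=G$), so the coincident-mass case you guard against never arises, though this does not affect the argument.
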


\begin{proposition}[Continuation Punishment with $n$ Probation Tiers] \label{ContinuationPunishmentProp}
In the $n$-tier probation system, suppose the tier-dependent rewards $r_G, r_1, \dots, r_n$ are monotone. The binding continuation punishment is:
\begin{equation}
    \min_{\rho \in \mathcal{R}_n} \Phi_{\rho} = \Phi_{P_1} = \sum_{j=1}^{n-1} \alpha^{j-1}(r_{j+1} - r_j) + \alpha^{n-1}(r_G - r_n)
    \label{eq:BindingContinuationPunishment}
\end{equation}
where $\alpha = \delta(1 - \eta)$. Moreover, monotone rewards imply $V_{P_1} \leq V_{P_2} \leq \dots \leq V_{P_n} \leq V_G$.
\end{proposition}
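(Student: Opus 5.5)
The plan is to work entirely with differences of adjacent continuation values, where the common terms in the Bellman system \eqref{eq:BellmanSystem} cancel. By Lemma~\ref{EUContinuationValuesLemma} the values $V_G, V_{P_1},\dots,V_{P_n}$ exist and are finite, so these differences are well defined. The first step is to read off $\Phi_\rho$ from the transition function \eqref{BinaryTransitionFn}. Since $T(\rho,1)=P_1$ for every $\rho$, we have $\Phi_\rho = V(T(\rho,0)) - V_{P_1}$. Concretely, this gives
\[
\Phi_G = V_G - V_{P_1},\qquad \Phi_{P_j} = V_{P_{j+1}} - V_{P_1}\ (j<n),\qquad \Phi_{P_n} = V_G - V_{P_1}.
\]
Hence $\Phi_{P_1} = V_{P_2}-V_{P_1}$ when $n\ge 2$, and $\Phi_{P_1}=V_G-V_{P_1}$ when $n=1$.

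Next, I define the one-step gaps $D_j := V_{P_{j+1}} - V_{P_j}$ for $j<n$ and $D_n := V_G - V_{P_n}$. I then subtract consecutive lines of \eqref{eq:BellmanSystem}. In each subtraction the terms $-\bar c - \eta\lambda S$ and $\delta\eta V_{P_1}$ cancel, leaving only the reward difference and the discounted pass-branch difference with weight $\alpha=\delta(1-\eta)$. Subtracting the $P_n$ line from the $G$ line gives $V_G - V_{P_n} = (r_G - r_n) + \alpha(V_G - V_G)$, so $D_n = r_G - r_n$. Subtracting the $P_j$ line from the $P_{j+1}$ line gives the recursion
\[
D_j = (r_{j+1}-r_j) + \alpha D_{j+1}, \qquad j=1,\dots,n-1.
\]
Unrolling this backward from $D_n$ yields
\[
D_1 = \sum_{j=1}^{n-1}\alpha^{j-1}(r_{j+1}-r_j) + \alpha^{n-1}(r_G-r_n).
\]
This is exactly the right-hand side of \eqref{eq:BindingContinuationPunishment}, and $D_1=\Phi_{P_1}$. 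The case $n=1$ is consistent: the sum is empty and $\Phi_{P_1}=D_1=r_G-r_1$.

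For the monotonicity claim, I use backward induction on the recursion. Monotone rewards give $r_{j+1}-r_j\ge 0$ and $r_G - r_n\ge0$, and $\alpha\in(0,1)$. Hence $D_n\ge 0$, and $D_{j+1}\ge 0$ implies $D_j\ge 0$, so every $D_j\ge0$. This is precisely $V_{P_1}\le\cdots\le V_{P_n}\le V_G$.

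Finally, for the minimum, I write every $\Phi_\rho$ as a telescoping sum of gaps. For $j<n$ we have $\Phi_{P_j}=D_1+\dots+D_j$, and $\Phi_G=\Phi_{P_n}=D_1+\dots+D_n$. Since all $D_k\ge 0$, each $\Phi_\rho \ge D_1 = \Phi_{P_1}$, so the minimum is attained at $P_1$.

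The main obstacle is only bookkeeping: one must check that the cross term $\delta\eta V_{P_1}$ cancels in every subtraction, including the $G$-versus-$P_n$ line, where both pass branches lead to $G$ and this is what makes $D_n$ free of continuation terms. One must also handle the boundary index $j=n-1$, where $P_{j+1}=P_n$ moves to $G$ on a pass, and verify the $n=1$ edge case separately.
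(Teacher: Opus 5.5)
Your proposal is correct and follows essentially the same route as the paper: read off $\Phi_\rho = V(T(\rho,0)) - V_{P_1}$ from the transition rule, define the gaps $D_j$, subtract adjacent Bellman equations to get $D_n = r_G - r_n$ and $D_j = (r_{j+1}-r_j) + \alpha D_{j+1}$, and unroll from $j=1$. The only difference is the order of the steps. The paper asserts $V_{P_1}\le\cdots\le V_{P_n}\le V_G$ before computing the closed form. You instead derive it from the recursion by backward induction on $D_j\ge 0$, and then get the minimum from the telescoping identities $\Phi_{P_j}=D_1+\cdots+D_j$, which makes explicit a step the paper leaves unargued.
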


\subsection{Participation Constraints and Deterring Sybil Re-entry}\label{ParticipationConstraint}
A provider who exits the protocol earns a per-period \textit{outside option} $\bar{u} \geq 0$, representing the return from deploying their capital and hardware in alternative use. This value may be small in protocols with specialized equipment but will remain positive with our assumption that locked capital can always be deployed elsewhere. 

Under compliance, a provider incurs expected false positive slashing cost of $\eta \lambda S$ per period and opportunity cost $\kappa S$ for locked capital, where $\kappa > 0$ is the per-period cost of locked capital. The participation constraint, or the condition under which a provider chooses to remain in the protocol, must also hold at the binding probation tier to give the threat of demotion credible force: $r_1 - \bar{c} - \eta \lambda S - \kappa S \geq \bar{u}$. Rearranging, the minimum probation reward is then: 
\begin{equation}
    r_1 \geq \bar{u} + \bar{c} + S(\eta \lambda + \kappa)
    \label{eq:MinProbationReward}
\end{equation}
Equation \eqref{eq:MinProbationReward} is a sufficient flow condition.\footnote{$V$ omits the per-period capital cost $\kappa S$. Defining $W(\rho) = V(\rho) - \kappa S/(1 - \delta)$, the dynamic participation condition at tier $\rho$ is $W(\rho) \geq \bar{u}/(1 - \delta)$.}

\begin{assumption}[Cost of Re-entry] \label{CostofReEntry}
Let $U_{\text{lock}} = \bar{u} \cdot \delta(1 - \delta^{\Lambda_{\text{lock}}})/ (1 - \delta)$ be the present value of forgone outside options during the lockup period. The cost of re-entry must satisfy:
\[\kappa_{\text{entry}} + \lambda S + U_{\text{lock}} \geq \delta(V_G - V_{P_1})\]
where $\kappa_{\text{entry}} \geq 0$ is the cost of establishing a new identity, the price of a cheap pseudonym in the sense of \cite{friedmanresnick2001}\footnote{This assumption requires that each hardware node is limited to one on-chain identity, as creating two identities tied to the same hardware would allow trivial avoidance of probation. This assumption may be strong in certain DePIN verticals, but is typically handled with on-device onboarding keys or hardware attestation.}.
\end{assumption}

The left side is the total cost of re-entry: onboarding expenses, new additional capital requirements, and foregone outside option earnings during the lockup period while both stakes are committed. The right side represents the discounted benefit of escaping probation. Because $U_{\text{lock}}$ is increasing in $\Lambda_{\text{lock}}$ but bounded above by $\delta \bar{u}/(1 - \delta)$ and $\lambda \leq 1$, lockup and stake cannot always deter identity reset. Assumption \ref{CostofReEntry} is a feasibility condition, satisfiable whenever the continuation gain $\delta(V_G - V_{P_1})$ is below the maximal re-entry cost the protocol can impose. The key tradeoff is that increasing these requirements will likely raise barriers for honest new providers. We formalize this optimization problem in Section 5. 

\subsection{Compliance Incentive Compatibility} \label{MainEquilibriumTheorem}

The two terms in \eqref{eq:MainInequality} are the designer's two enforcement channels: the immediate slashing penalty $\lambda S$ and the discounted reputation punishment $\delta \min_{\rho} \Phi_{\rho}$. Our main result shows that when these channels jointly cover $\Gamma$ and providers find participation worthwhile, compliance is sequentially incentive compatible.

\begin{theorem}[Compliance Under Slashing and Reputation]
Fix a committed protocol $\Pi = (S, \lambda, r, T, \Lambda_{\text{lock}})$. Suppose Assumptions \ref{NontrivialAssumption} and \ref{OptimalComplianceAssumption} hold, the audit layer satisfies state-conditional separation, and participation and re-entry are satisfied. Then compliance is sequentially incentive compatible against every deviation after every public reputation tier and private state if and only if 

\[\lambda S+\delta \min_{\rho \in \mathcal{R}_n} \Phi_{\rho} \geq \Gamma\]

\end{theorem}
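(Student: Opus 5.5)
The plan is to reduce the theorem to the Main Incentive Inequality (Lemma~\ref{MainIncentiveLemma}) and then check the two things the lemma does not cover. First, its hypothesis that a fail outcome weakly lowers total payoff at every tier. Second, deviations outside $\mathcal{D}(x)$ together with exit and re-entry deviations. I will use the one-shot deviation principle for the discounted game. Stage payoffs are bounded: $\mathcal{A}$ and $\mathcal{X}$ are finite, and $r$ and $S$ are bounded. Hence a provider has no profitable deviation after any public tier $\rho$ and private state $x$ if and only if no single-period deviation followed by a return to compliance is profitable. The continuation values under compliance are then the $V(\rho)$ from Lemma~\ref{EUContinuationValuesLemma}, which exist and are finite. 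Because $x_t$ is i.i.d. and rewards depend only on the tier, the relevant state after any history is the pair $(\rho,x)$.

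The first step is the hypothesis of Lemma~\ref{MainIncentiveLemma}. By Proposition~\ref{ContinuationPunishmentProp}, monotone rewards give $V_{P_1}\le V_{P_j}\le V_G$. Since $T(\rho,1)=P_1$ for every $\rho$, we get $\Phi_\rho=V(T(\rho,0))-V_{P_1}\ge 0$, and $\lambda S>0$, so a fail weakly reduces total payoff. The continuation punishment is the same whatever action produced the outcome. This holds because $T$ and $r$ depend only on $z$ and $\rho$, and the next state is drawn independently. Consequently, for any action $a$ in $(\rho,x)$, the payoff difference relative to $a^*$ is
\[
-\Delta_a(x)+\varepsilon_a(x)\,[\lambda S+\delta\Phi_\rho],
\]
which is display \eqref{eq:ExpectedPayoffDifference}.

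The second step splits actions into two cases. If $a\in\mathcal{D}(x)$, separation gives $\varepsilon_a(x)>0$. Lemma~\ref{MainIncentiveLemma} then says nonnegativity at every $(\rho,x,a)$ is equivalent to $\lambda S+\delta\min_\rho\Phi_\rho\ge\Gamma$. The minimum is attained, at $P_1$, by Proposition~\ref{ContinuationPunishmentProp}, and $\Gamma$ is a maximum over a finite nonempty set by Assumption~\ref{NontrivialAssumption}. If instead $a\notin\mathcal{D}(x)$, then $\Delta_a(x)\le 0$. Assumption~\ref{OptimalComplianceAssumption} gives $\varepsilon_a(x)\ge0$, and we already have $\lambda S+\delta\Phi_\rho\ge0$, so the difference is nonnegative unconditionally. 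This also shows these actions impose no extra constraint, which the ``only if'' direction needs.

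The third step handles the exit and Sybil re-entry deviations, which I expect to be the main obstacle, since the lemma is silent on them. Exit at any tier is unprofitable by the participation condition $W(\rho)\ge\bar u/(1-\delta)$. The flow condition \eqref{eq:MinProbationReward} at $P_1$, combined with monotonicity, delivers this at every tier. Re-entry could in principle replace the continuation after a fail, $V_{P_1}$, with a fresh start at $G$, which would shrink the effective punishment. I would argue that Assumption~\ref{CostofReEntry} makes the net value of resetting identity after demotion at most $\delta V_{P_1}$. The effective post-fail continuation is therefore still $V_{P_1}$, so $\Phi_\rho$ is unchanged. A related worry is combined deviations, such as shirking and then resetting. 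One-shot deviation reasoning handles these, but only if re-entry is modeled as a period-boundary choice whose value is bounded by $\delta V_{P_1}$. I would state that carefully.

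Necessity follows from Lemma~\ref{MainIncentiveLemma}. If the inequality fails, there is a binding pair $(d,x)$ and a tier $\rho=P_1$ at which the one-shot deviation $d$ yields strictly positive gain. The step for this is a remark that, in a positive-probability state $x$, this deviation beats compliance.
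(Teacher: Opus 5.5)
Your proposal is correct and follows essentially the same route as the paper: continuation values from Lemma~\ref{EUContinuationValuesLemma}, Lemma~\ref{MainIncentiveLemma} for deviations in $\mathcal{D}(x)$, monotone rewards (so $\Phi_\rho\ge 0$) together with Assumption~\ref{OptimalComplianceAssumption} for the remaining actions, and the one-shot deviation principle, with the participation flow condition and Assumption~\ref{CostofReEntry} ruling out exit and re-entry; your version is more explicit than the paper in checking the lemma's hypothesis and in explaining why identity reset leaves the post-fail continuation at $V_{P_1}$. One small slip: you assert $\lambda S>0$, but $S=\underline{S}\in[0,\bar{S}]$ may be zero; only $\lambda S\ge 0$ is needed, so nothing breaks.
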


\begin{proof}
    By Lemma \ref{EUContinuationValuesLemma}, the Bellman system \eqref{eq:BellmanSystem} has a unique bounded solution $V$, the continuation values induced by the committed protocol under compliant play. Since the reputation transitions are deterministic, any public history is summarized by the reputation tier $\rho$. By Lemma~\ref{MainIncentiveLemma} applied at these continuation values, compliance is weakly preferred to every profitable deviation $d \in \mathcal{D}(x)$ after every tier and state if and only if the stated inequality holds. 
    
    We now verify deviations $d \notin \mathcal{D}(x)$. By Proposition \ref{ContinuationPunishmentProp}, monotone rewards imply $V_{P_1} \leq V_{P_2} \leq \dots \leq V_G$, which implies that $\Phi_{\rho} \geq 0$ for every $\rho$. Hence, $\lambda S + \delta \Phi_{\rho} \geq 0$. For any $d$ with $\Delta_d(x) \leq 0$, Assumption \ref{OptimalComplianceAssumption} gives $\varepsilon_d(x) \geq 0$, and the payoff change from such a deviation, $\Delta_d(x) - \varepsilon_d(x) \cdot [\lambda S + \delta \Phi_{\rho}]$, is non-positive at every tier and every enforcement level. Compliance is therefore weakly preferred to every deviation if and only if it is preferred on $\mathcal{D}(x)$. 

    To prove the biconditional result, we show that the one-shot deviation principle \cite{mailathsamuelson06} applies, since stage payoffs are bounded and $\delta < 1$. $(\rho, x)$ is the complete state relevant to payoffs, so the single-period comparisons above exhaust all deviation strategies in the game. Exiting the protocol and re-entry are ruled out by participation in the flow condition of $\eqref{eq:MinProbationReward}$ and Assumption \ref{CostofReEntry} respectively. 
    \qed
\end{proof}

\section{Protocol Design}
Theorem 1 involves parameters the designer controls ($S$, $\lambda$, $n$, $r_i$) and primitives the service and state environments fix ($\Gamma$, $\eta$, $\bar{c}$, and $\bar{u}$). The designer's problem is to choose the controllable parameters to satisfy all constraints given the primitives, at minimum cost, which we formalize in this section. We solve the closed form of this optimization and show how it specializes across three broad categories of audit technologies that generate our detection primitives.

\subsection{Audit Frequency and Redundancy}\label{AuditFrequencyandRedundancy}
Before stating the designer's problem formally, we characterize how audit frequency and redundancy affect the separation condition and hence the deterrence ratio $\Gamma$. Suppose the protocol audits each provider independently with probability $p \in (0, 1]$ per period. When no audit occurs, the public outcome defaults to pass ($z = 0$). The effective false positive rate and detection gap scale accordingly. This scaling assumes the provider does not observe whether the current period will be audited before choosing $a_t$.

\begin{lemma}[Audit Frequency]\label{AuditFrequencyLemma}
Under random auditing with probability $p$, the effective false positive rate and detection gap are: 
\[\eta_{\text{eff}}(x) = p \cdot \eta_1(x), \qquad \varepsilon_{\text{eff}, d}(x) = p \cdot \varepsilon_{1, d}(x)\]
where $\eta_1(x)$ and $\varepsilon_{1, d}(x)$ are the rates under certain auditing $p = 1$. The deterrence ratio under audit frequency $p$ is: 
\[\Gamma(p) = \max_{x, d} \frac{\Delta_d(x)}{p \cdot \varepsilon_{1, d}(x)} = \frac{\Gamma_1}{p}\]
where $\Gamma_1$ is the deterrence ratio under certain auditing.
\end{lemma}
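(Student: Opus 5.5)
The plan is to compute the effective fail probabilities directly by conditioning on the independent audit draw, and then push the scaling through Definition~\ref{def:deterrence-ratios}. Fix a state $x \in \operatorname{supp}(\mu)$ and an action $a$. Let $A_t \in \{0,1\}$ be the audit indicator, with $\Pr[A_t = 1] = p$. We use two facts. First, $A_t$ is independent of $(x_t, a_t)$ and of the audit layer's internal randomness. Second, by the timing assumption, the provider chooses $a_t$, including any signal manipulation, before observing $A_t$.

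By the law of total probability,
\[\Pr[z_t = 1 \mid x, a] = p\cdot \Pr[z_t = 1 \mid x, a, A_t = 1] + (1-p)\cdot \Pr[z_t=1\mid x,a,A_t=0].\]
The default rule sets $z=0$ when no audit occurs, so the last term vanishes. When an audit does occur, the audit layer behaves exactly as under certain auditing.

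For $a = a^*$ this gives $\eta_{\text{eff}}(x) = p\,\eta_1(x)$. For a deviation $d$, take $P_{1,d}(x)$ to be the worst-case fail probability under the provider's best manipulation when $p=1$. The effective worst case is then $p\,P_{1,d}(x)$. Subtracting the two expressions gives $\varepsilon_{\text{eff},d}(x) = p\,\varepsilon_{1,d}(x)$.

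The main obstacle is justifying that the best manipulation is the same at every $p$. The manipulation chosen at frequency $p$ is selected ex ante, before the audit draw. The effective fail probability of any manipulation $m$ is $p$ times its fail probability under a sure audit, because an unaudited period passes regardless of $m$. Multiplying every candidate's fail probability by the same constant $p>0$ leaves the optimizer unchanged. So the minimizer of fail probability at $p=1$ remains the minimizer at every $p\in(0,1]$, and the worst case scales by exactly $p$. If the paper's $P_d(x)$ instead reflects a joint optimization of cost and detection (a manipulation bundled into $d$), the same argument applies action-by-action, since $\Delta_d(x)$ does not depend on $p$.

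Finally, $p>0$, so the sign of each detection gap is unchanged. In particular, state-conditional separation and Assumption~\ref{OptimalComplianceAssumption} hold at frequency $p$ if and only if they hold at $p=1$, and the profitable set $\mathcal{D}(x)$ is unaffected. Substituting into Definition~\ref{def:deterrence-ratios} gives $\gamma_d(x;p) = \Delta_d(x)/(p\,\varepsilon_{1,d}(x)) = \gamma_{1,d}(x)/p$. Because $1/p$ is a positive constant, the maximum over the finite index set $\{(x,d) : x\in\operatorname{supp}(\mu),\, d\in\mathcal{D}(x)\}$ is attained at the same pair, and therefore $\Gamma(p) = \Gamma_1/p$.
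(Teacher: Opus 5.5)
Your proposal is correct and follows the paper's argument: condition on the independent audit draw, use the pass-by-default rule so the unaudited branch contributes nothing, and read off $p\,\eta_1(x)$, $p\,\varepsilon_{1,d}(x)$, and $\Gamma_1/p$, using that $\Delta_d(x)$ does not depend on $p$. You also make explicit two steps that the paper's short proof leaves implicit: scaling every manipulation's fail probability by $p>0$ leaves the fail-minimizing manipulation unchanged, and $p>0$ preserves the signs of the detection gaps and hence the index set of the maximum.
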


This scaling is the standard logic of inspection games \cite{avenhaus02}: reducing audit frequency inflates the deterrence ratio and hence the required collateral. The minimum stake under audit frequency $p$ is $S^*(p) = \Gamma_1 / (p\lambda)$ in the staking-dominated regime. The designer trades off direct audit cost against collateral requirements. 

\subsection{Reward Design}

The structure of \eqref{eq:BindingContinuationPunishment} gives the protocol designer insight into how to optimally set rewards for their choice of $n$ tiers. The binding continuation punishment is expressed as a discounted sum of the reward increments along the ladder. Taking $\Delta_j = r_{j + 1} - r_j$ for $j < n$ and $\Delta_n = r_G - r_n$, it is clear that the increments telescope towards the total reward span $\sum_{j=1}^n \Delta_j = r_G - r_1$. The geometrically declining weights in this equation then imply front-loading as the optimal reward scheme. 

\begin{proposition}[Front-Loaded Rewards]\label{FrontLoadingProp}
Fix the floor $r_1$ and ceiling $r_G$. For any monotone schedule $r_1 \leq \dots \leq r_n \leq r_G$, the continuation punishment $\Phi_{P_{1}}$ is maximized by the front-loaded schedule $r_2 = r_3 = \dots = r_n = r_G$
where $\Phi_{P_{1}} = r_G - r_1$.
\end{proposition}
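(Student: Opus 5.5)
We take the closed form of Proposition~\ref{ContinuationPunishmentProp} as given and treat the claim as a linear optimization over the increments. Write $\Delta_j = r_{j+1} - r_j$ for $j<n$ and $\Delta_n = r_G - r_n$. Then \eqref{eq:BindingContinuationPunishment} reads
\[
\Phi_{P_1} = \sum_{j=1}^{n} \alpha^{j-1}\Delta_j ,
\]
with $\alpha = \delta(1-\eta)\in(0,1)$, since $\delta<1$ and $\eta\in(0,1)$. With $r_1$ and $r_G$ fixed, the set of monotone schedules $r_1\le r_2\le\dots\le r_n\le r_G$ corresponds bijectively to the set of increment vectors with $\Delta_j\ge 0$ and $\sum_{j=1}^n \Delta_j = r_G - r_1$. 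The latter identity holds because the increments telescope, as already noted in the text. The problem is therefore to maximize a linear functional with weights $1=\alpha^0 > \alpha^1 > \dots > \alpha^{n-1}$ over a scaled simplex.

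The key step is an exchange bound. For any feasible increment vector, each weight satisfies $\alpha^{j-1}\le 1$ and each increment is nonnegative, so
\[
\Phi_{P_1} = \sum_j \alpha^{j-1}\Delta_j \le \sum_j \Delta_j = r_G - r_1 .
\]
The front-loaded schedule $r_2=\dots=r_n=r_G$ gives $\Delta_1 = r_G - r_1$ and $\Delta_j=0$ for $j\ge 2$. It therefore attains the bound, so $\Phi_{P_1} = r_G - r_1$ there, which establishes maximality. We can also record uniqueness when $r_G>r_1$. Equality in the bound forces $(1-\alpha^{j-1})\Delta_j = 0$ for all $j\ge 2$, and since $\alpha<1$ this means $\Delta_j=0$ for $j\ge2$. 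Hence front-loading is the unique maximizer.

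The only step needing care is the applicability of Proposition~\ref{ContinuationPunishmentProp}. That proposition requires monotone rewards, which is exactly the feasible class here, so the formula is valid for every candidate schedule, including the front-loaded one. We should also check the degenerate case $n=1$. There the sum in \eqref{eq:BindingContinuationPunishment} is empty, $\Phi_{P_1}=r_G-r_1$ trivially, and the claim is immediate.

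As a sanity check, the front-loaded schedule can be verified directly from the Bellman system \eqref{eq:BellmanSystem}. With $r_j=r_G$ for $j\ge2$, the values $V_{P_2},\dots,V_{P_n}$ and $V_G$ all satisfy the same fixed-point equation as $V_G$, so by Lemma~\ref{EUContinuationValuesLemma} they are equal. Subtracting the $P_1$ and $G$ equations then gives $V_G - V_{P_1} = r_G - r_1$ directly. This check is optional and does not replace the main argument.
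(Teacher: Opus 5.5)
Your proof is correct and follows the paper's argument. Both rewrite $\Phi_{P_1}$ as the $\alpha^{j-1}$-weighted sum of nonnegative increments that telescope to $r_G - r_1$, and conclude that the maximum puts the entire span on the weight-one increment $\Delta_1$; your bound $\alpha^{j-1}\le 1$ spells out the paper's ``largest weight'' step, and your uniqueness remark and Bellman sanity check are correct additions the paper does not include.
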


Front-loading dominates the natural alternative of uniform probation rewards: a provider at $P_1$ who merely advances to an identically paid $P_2$ gains almost nothing from passing an audit, so the incentive to comply during probation weakens with each additional tier.

\begin{remark}[Uniform Probation Rewards]
Under uniform probation rewards $r_j = r_P$ for all $j \leq n$, all intermediate increments vanish and $\Phi_{P_1} = \alpha^{n-1}(r_G - r_P)$, which is decreasing in $n$. Any probation length $n \geq 2$ with uniform rewards will be counterproductive at the bounding tier. Setting a uniformly low probation reward and a high good-tier reward is strictly dominated by a front-loaded schedule in which the steepest gradient is between $P_1$ and $P_2$.
\end{remark}

\begin{proposition} [Optimal Probation Depth] \label{OptimalProbationDepth}
For every feasible set of design choices $(n, p, S, \mathbf{r})$, the two-tier design $(1, p, S, (r_1, \, r_1 + \Phi_{P_1}(n, \mathbf{r})))$ is feasible with the same deterrence and weakly lower reward cost. $n=1$ then attains the optimum of the designer's problem. 
\end{proposition}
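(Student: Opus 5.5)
The plan is to verify, constraint by constraint, that the two-tier design $(1,p,S,(r_1,r_1+\Phi_{P_1}(n,\mathbf r)))$ inherits feasibility from $(n,p,S,\mathbf r)$, and then to compare reward outlays. Write $r_G' := r_1 + \Phi_{P_1}(n,\mathbf r)$.

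\emph{Step 1: monotonicity and deterrence.} Apply \eqref{eq:BindingContinuationPunishment} with $n=1$. The sum is empty, so $\Phi_{P_1}(1) = r_G' - r_1 = \Phi_{P_1}(n,\mathbf r)$. Each increment $r_{j+1}-r_j$ and $r_G - r_n$ is nonnegative, so $\Phi_{P_1}(n,\mathbf r)\ge 0$. Hence $r_1 \le r_G'$ and the new schedule is monotone, which makes Proposition~\ref{ContinuationPunishmentProp} applicable. For the same reason, telescoping with $\alpha\le 1$ gives $\Phi_{P_1}(n,\mathbf r)\le r_G - r_1$, i.e.\ $r_G'\le r_G$. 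Since $p,S,\lambda$ are unchanged, $\Gamma(p)=\Gamma_1/p$ is unchanged by Lemma~\ref{AuditFrequencyLemma}. The binding term $\min_\rho \Phi_\rho=\Phi_{P_1}$ is identical in both designs. So the main inequality \eqref{eq:MainInequality} holds for the new design exactly when it held for the old one: same deterrence.

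\emph{Step 2: participation.} The floor $r_1$ is unchanged, so the flow condition \eqref{eq:MinProbationReward} is unchanged.

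\emph{Step 3: re-entry.} I need $\delta(V'_G - V'_{P_1}) \le \delta(V_G - V_{P_1})$. For $n=1$ the two Bellman equations in \eqref{eq:BellmanSystem} have identical continuation terms. Subtracting them gives $V'_G - V'_{P_1} = r_G' - r_1 = \Phi_{P_1}(n,\mathbf r)$. In the original design, by definition $\Phi_{P_1} = V_{P_2} - V_{P_1}$ (or $V_G - V_{P_1}$ if $n=1$). Proposition~\ref{ContinuationPunishmentProp} gives $V_{P_2}\le V_G$, so $\Phi_{P_1}\le V_G - V_{P_1}$. Therefore Assumption~\ref{CostofReEntry} holds for the new design whenever it held for the old one.

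\emph{Step 4: reward cost (main obstacle).} This is the step I expect to be hardest, because the comparison is not pathwise. For example, with $r_2=r_1$ the old design pays $r_1$ after one pass while the new one pays $r_G'>r_1$. I would therefore compare expected outlays under compliant play. I will take the reward cost to be the long-run average reward paid to a compliant provider; the same argument works with discounted weights.

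Let $k$ be the number of consecutive passes since the last fail. Under compliance, fails are i.i.d.\ with probability $\eta$, so the stationary law of $k$ is $\pi_k=\eta(1-\eta)^k$. Hence $\Pr(k\ge j)=(1-\eta)^j$. The old design pays $r_1$ at $k=0$, $r_{k+1}$ for $1\le k\le n-1$, and $r_G$ for $k\ge n$. By Abel summation, its average reward is
\[
r_1+\sum_{j=1}^{n}(1-\eta)^j\Delta_j,
\]
with $\Delta_j$ as in the reward-design subsection. The new design pays $r_1$ at $k=0$ and $r_G'$ for $k\ge 1$, so its average reward is
\[
r_1+(1-\eta)\Phi_{P_1}=r_1+\sum_{j=1}^n (1-\eta)^j\delta^{j-1}\Delta_j.
\]
Since $\delta<1$ and $\Delta_j\ge 0$, this is weakly smaller term by term. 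For a discounted criterion from a given starting tier, the same Abel-summation argument applies with $\Pr(k\ge j)$ replaced by the discounted occupation weights. I would check that these weights dominate $(1-\eta)\alpha^{j-1}$ in the same way.

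Combining Steps 1--4, every feasible design is weakly dominated by a feasible $n=1$ design. Hence $n=1$ attains the optimum of the designer's problem.
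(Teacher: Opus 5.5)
Your proposal is correct and follows the paper's own argument. Keeping $S$, $p$, $r_1$ and $\Phi_{P_1}$ fixed leaves (IC) and participation unchanged, and writing $\bar{R}$ in increment form $r_1+\sum_{j}(1-\eta)^j\Delta_j$ gives the term-by-term comparison with $r_1+\sum_j\delta^{j-1}(1-\eta)^j\Delta_j$ (your Abel summation over the pass counter is just Proposition~\ref{SteadyStateProp} rewritten). Your extra checks are correct additions the paper skips: that the new schedule is monotone, and that Assumption~\ref{CostofReEntry} is preserved via $V'_G-V'_{P_1}=\Phi_{P_1}\le V_G-V_{P_1}$; the discounted-criterion aside is unnecessary because the designer's objective is the stationary average $\bar{R}$.
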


Deeper ladders with more reward tiers arise from constraints outside of our program, such as requiring probation tiers to pay below set caps. 

\subsection{The Protocol Designer's Problem}
We now formalize the optimization problem from the perspective of the protocol designer. We fix $\lambda = 1$ throughout this section. Our incentive constraint depends on slashing only through the product $\lambda S$, while the cost of capital $\kappa S$ depends on $S$ alone. Thus the designer always prefers maximal slashing with the smallest stake meeting the constraint. 

Before stating the problem formally, we must derive the steady-state cost of operating the reputation system. Under compliance, the reputation tier evolves as a Markov chain on $\mathcal{R}_n$: a fail with probability $\eta$ sends the provider to state $P_1$, and a pass with probability $1 - \eta$ sends the provider one tier toward $G$. The chain is both irreducible and aperiodic, so it admits a unique stationary distribution. 

\begin{proposition}[Steady-State Distribution] \label{SteadyStateProp}
    Under compliance, the stationary distribution over tiers is:
\[\pi_G = (1 - \eta)^n, \qquad \pi_{P_j} = \eta(1 - \eta)^{j-1} \quad \text{for } j = 1, \dots, n\]
\end{proposition}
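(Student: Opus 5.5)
The plan is to write down the transition kernel of the tier chain under compliance, guess that the stated vector is stationary, check the balance equations one tier at a time, and then invoke uniqueness. Under compliance the tier process is a time-homogeneous Markov chain on $\mathcal{R}_n$. Because $x_t$ is i.i.d. and the provider plays $a^*$ in every state, each period's fail probability is the averaged rate $\eta = \sum_x \mu(x)\eta(x)$, independent of the current tier and of the past. By \eqref{BinaryTransitionFn} the kernel is:
\begin{itemize}[label={}]
\item every tier moves to $P_1$ with probability $\eta$;
\item $P_j \to P_{j+1}$ for $j<n$ with probability $1-\eta$;
\item $P_n \to G$ with probability $1-\eta$;
\item $G \to G$ with probability $1-\eta$.
\end{itemize}
Irreducibility (every tier reaches $P_1$ in one step, and $P_1$ reaches every tier by a chain of passes) and aperiodicity (the self-loop at $G$, since $\eta<1$) are stated in the text. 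So it suffices to exhibit one probability vector satisfying $\pi = \pi K$; uniqueness then identifies it as the stationary law.

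First, the balance equation at $P_1$. Every tier sends mass $\eta$ to $P_1$, and the chain has no other route into $P_1$. Hence $\pi_{P_1} = \eta \sum_\rho \pi_\rho = \eta$, which matches the formula at $j=1$. Second, for $2 \le j \le n$ the only inflow to $P_j$ comes from $P_{j-1}$ on a pass. This gives $\pi_{P_j} = (1-\eta)\pi_{P_{j-1}}$, and by induction $\pi_{P_j} = \eta(1-\eta)^{j-1}$. Third, the inflow to $G$ comes from $P_n$ and from $G$ itself, both on a pass:
\[\pi_G = (1-\eta)\pi_{P_n} + (1-\eta)\pi_G,\]
so $\eta \pi_G = (1-\eta)\eta(1-\eta)^{n-1}$ and therefore $\pi_G = (1-\eta)^n$. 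Dividing by $\eta$ here uses $\eta>0$, which follows from $\eta(x)\in(0,1)$ on the support of $\mu$.

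Finally, check normalization. The probation masses form a geometric sum:
\[\sum_{j=1}^n \eta(1-\eta)^{j-1} = 1-(1-\eta)^n,\]
and adding $\pi_G = (1-\eta)^n$ gives total mass $1$. The vector is nonnegative, so it is a probability vector, and with uniqueness the proof is complete.

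The only step needing care, and the one I expect to be the main obstacle, is justifying that the process is genuinely a homogeneous Markov chain with fail probability $\eta$ rather than a tier-dependent $\eta(x)$. This rests on the private state being drawn i.i.d. each period, independently of the tier, and on the provider complying in every state, so the fail probability is averaged over $\mu$ each period. I would state this explicitly before verifying the balance equations.
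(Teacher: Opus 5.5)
Your proof is correct and follows the paper's argument: the balance equations at $P_1$ (using total mass $1$) and at $P_j$ for $j \ge 2$ give the geometric probation masses. The only difference is that you obtain $\pi_G$ from the balance equation at $G$ and then check normalization, whereas the paper reads $\pi_G$ directly off normalization; your explicit justification that the fail probability is the tier-independent average $\eta$ is a useful addition the paper leaves implicit.
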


A provider spends $(1 - \eta)^n$ of time in good standing, which decreases exponentially in probation depth. The expected per-period reward paid to a compliant provider in steady-state is then: 
\[\bar{R}(n, \mathbf{r}) = (1 - \eta)^n r_G + \eta \sum_{j = 1}^n (1 - \eta)^{j -1} r_j\]

Under audit frequency $p$, the effective false positive rate is then $\eta = p \eta_1$ and the deterrence ratio is $\Gamma(p) = \Gamma_1/p$ by Lemma \ref{AuditFrequencyLemma}. The designer then solves:

\begin{equation}
\begin{aligned}
\min_{n,\, p,\, S,\, \mathbf{r}} \quad & \bar{R}(n, \mathbf{r}) 
   + p \cdot c_{\text{audit}} \\
\text{subject to} \quad 
& S + \delta\, \Phi_{P_1}(n, \mathbf{r}) \geq \Gamma(p) 
   && \text{(IC)} \\
& r_1 \geq \bar{u} + \bar{c} + S(\eta + \kappa) 
   && \text{(Participation)} \\
& r_1 \leq r_2 \leq \dots \leq r_n \leq r_G 
   && \text{(Monotonicity)} \\
& p \in (0, 1],\quad S \in [0, \bar{S}],\quad n \in \mathbb{Z}_+
\end{aligned}
\label{eq:DesignersProblem}
\end{equation}

where $\Phi_{P_1}(n, \mathbf{r})$ is given by \eqref{eq:BindingContinuationPunishment} with $\alpha = \delta(1 - \eta)$. We omit the re-entry condition (Assumption \ref{CostofReEntry}) from the program: it is a maintained feasibility condition on the environment (Section~\ref{ParticipationConstraint}), and since $\Lambda_{\text{lock}}$ contributes no cost to the objective, the designer sets it at the maximum consistent with honest entry after solving the remaining parameters. 

The program is tractable and directly computable. For fixed probation depth $n$ and audit frequency $p$, the effective rates $\eta$ and $\Gamma(p)$ are constants and the remaining problem in $(S, \mathbf{r})$ is a linear program: the objective $\bar{R}$ is linear in rewards, $\Phi_{P_1}$ is linear in the increments between rewards, and the participation and monotonicity constraints are linear in $(S, \mathbf{r})$. By Proposition \ref{OptimalProbationDepth} the search over $n$ collapses to $n=1$, yielding a linear program over a one dimensional grid. 

\begin{proposition}[Optimum Binding Constraints] \label{BindingConstraintsProp}
At any solution of \eqref{eq:DesignersProblem}, the IC and participation constraints bind.
\end{proposition}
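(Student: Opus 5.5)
The plan is a perturbation argument: starting from any optimum, if either constraint were slack I would build a feasible design with the same $(n,p)$ and strictly lower objective, which is a contradiction. First I would reduce the dimension. By Proposition~\ref{OptimalProbationDepth}, any optimum can be replaced by the two-tier design $n=1$ with the same deterrence and weakly lower cost. So it suffices to argue in the $n=1$ program, and to note that the same perturbations carry over to any $n$, as explained in the last step. With $n=1$ we have $\Phi_{P_1}=r_G-r_1$ by \eqref{eq:BindingContinuationPunishment}, and the objective is $(1-\eta)r_G+\eta r_1+p\,c_{\text{audit}}$, where $\eta=p\eta_1\in(0,1)$. Throughout, I hold $p$ fixed, so $\eta$ and $\Gamma(p)$ are constants and both coefficients $\eta$ and $1-\eta$ are strictly positive.

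\emph{Participation binds.} Suppose $r_1>\bar u+\bar c+S(\eta+\kappa)$. I lower $r_1$ and $r_G$ by the same small $t>0$, with $t$ below the slack. Then $\Phi_{P_1}$ is unchanged, so IC still holds; monotonicity is preserved; and participation still holds by the choice of $t$. The objective falls by exactly $t>0$, which contradicts optimality. Shifting the whole reward vector uniformly avoids having to trade $\eta$ against $1-\eta$.

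\emph{IC binds.} Suppose $S+\delta(r_G-r_1)>\Gamma(p)$. There are two cases.
\begin{enumerate}
\item If $S>0$, I lower $S$ by a small $t$. IC still holds, and participation becomes slack by $t(\eta+\kappa)>0$. Then I apply the uniform reward reduction from the previous step, which strictly lowers the objective.
\item If $S=0$, then $\delta(r_G-r_1)>\Gamma(p)>0$, using Assumption~\ref{NontrivialAssumption} and separation, so $\Gamma>0$. Hence $r_G>r_1$. I lower $r_G$ alone by a small $t$. Monotonicity survives, IC survives for small $t$, participation is untouched, and the objective falls by $(1-\eta)t>0$.
\end{enumerate}
Either case contradicts optimality, so IC binds.

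The main obstacle is bookkeeping at the boundary: ensuring each perturbation respects $S\in[0,\bar S]$ and the monotonicity chain, and ensuring that $\Gamma(p)>0$ rules out the degenerate corner $S=0$, $r_G=r_1$. If the claim is wanted for general $n$ rather than only after the reduction, the same two moves work. The uniform shift of all rewards leaves every increment, and hence $\Phi_{P_1}$, unchanged while lowering $\bar R$ by $t$ (the stationary weights sum to one by Proposition~\ref{SteadyStateProp}). For the IC move with $S=0$, I would lower $r_G$ together with any top tiers tied to it. This strictly lowers $\Phi_{P_1}$ and $\bar R$ because the weights $\alpha^{n-1}$ and $\pi_G$ are positive.
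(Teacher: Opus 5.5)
Your proof is correct and uses the same perturbation-by-contradiction idea as the paper, which lowers $r_1$ when participation is slack and lowers $r_G$ when IC is slack. Your version is tighter. The paper's ``lower $r_G$'' step violates monotonicity whenever $r_G$ is tied to $r_n$: this happens under the front-loaded schedule, and for $n=1$ with $r_G=r_1$ and $S>\Gamma(p)$. Your case split on $S$, and lowering the whole tied block when $S=0$, covers exactly this corner. Your reduction to $n=1$ is also valid as stated, since the two-tier replacement has the same $S$, $r_1$ and $\Phi_{P_1}$ (hence the same slacks) and weakly lower cost.
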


At the optimum, the designer then equates the marginal cost of collateral enforcement with the marginal cost of the reward spread. The designer chooses audit frequency to balance direct audit costs against the collateral and rewards required to cover $\Gamma(p)$.

\section{Discussion: Audit Classes and Enforcement Regimes}
The primitives of detection are generated by the specific technology and process of the audit layer. With our framework, we can classify three types of audit technologies based on how the detection gap $\varepsilon_d$ arises. Each class is in a regime of the main inequality \eqref{eq:MainInequality}, separated by which enforcement lever or combination of levers covers $\Gamma$. In \textbf{Staking-dominated} regimes, collateral alone suffices. \textbf{Mixed} regimes leave some residual to reputation spreads. The \textbf{Reputation-dominated} regime covers cases in which the required collateral is infeasible. This gives protocol designers a clear procedure when estimating their primitives and iterating on the audit layer: identify the binding deviation, determine how the detection gap arises, and invest in the lever that the corresponding regime exposes. 

\subsection{Attestation-Backed Networks}
In DePINs where work is verified through cryptographic attestations (i.e. proofs of replication, storage, or spacetime), the detection gap is near-maximal and $\Gamma$ is small. This is due to the nature of cryptographic audits, where discarding data is detectable within the small soundness error \cite{fisch2019}. Modest collateral requirements can then cover this cost, which induces the staking-dominated regime. As a result, the false positive rate $\eta$ is the binding variable, priced through the participation floor $\eta \lambda S$. Attestation-backed DePINs then concentrate effort on designing gradual fault fees and forgiveness given honest histories. 

\subsection{Physically Bounded Networks}
For networks that rely on physically located witnessing, like proof-of-coverage or location proving, exact positions cannot be verified against collusion \cite{chandran09}. $\varepsilon_d$ is then bounded by the graph structure of the witness set, where a deviating region is caught when challenges cross its cut boundary. $\Gamma_1$ is then moderate and induces the mixed regime. For illustration, take $\Gamma_1 = 20$, $\eta_1 = 0.05$, $\bar{c} = 10$, $\bar{u} = 2$, $\kappa = 0.1$, $\delta = 0.95$, $\bar{S} = 10$ in cost units per epoch with $p=1$. The binding constraints give $d\bar{R}/dS = (\eta + \kappa) - (1 - \eta)/\delta = -0.85 < 0$, so the optimum holds maximal stake with $r_1 = \bar{u} + \bar{c} + \bar{S}(\eta + \kappa) = 13.5$ and reputation spread $(\Gamma_1 - \lambda\bar{S}) /\delta \approx 10.53$, which covers the deterrence that collateral cannot. The direct way to improve this is to lower $\Gamma_1$ by densifying or randomizing witness assignment, raising the probability that the challenges cross the deviating set. 

\subsection{Statistically Bounded Networks}
For networks that audit by sampling outputs of work, like inference workloads or compute workloads in general, enforcement is hardest to design around. The detection gap is the distinguishability of a cheap substitute's output from honest work under the samples the audit requests. More sophisticated and capable spoofing methods drive $\varepsilon_d$ to zero and $\Gamma_1$ to infinity, which means that the separation condition is the most critical issue to address. Even in cases when separation barely holds, the required collateral will often be infeasible to suffice alone and reputation must be leaned on heavily in the reputation-dominated regime. This is further weakened when provider hardware is liquid and participation is high. Protocols in this category must first concentrate effort on verifiable inference schemes, an area of active research \cite{toploc2025}. 

\bibliographystyle{splncs04}
\bibliography{refs}

\appendix
\section{Omitted Proofs}\label{app:proofs}

\noindent\textbf{Proof of Lemma~\ref{EUContinuationValuesLemma}.}
The Bellman system \eqref{eq:BellmanSystem} defines an operator $\mathcal{T}: \R^{|\mathcal{R}_n|} \to \R^{|\mathcal{R}_n|}$. For any two candidate value vectors $V, V'$:
\[|(\mathcal{T}V)_\rho - (\mathcal{T}V')_\rho| = \delta|(1-\eta)(V_{T(\rho,0)} - V'_{T(\rho,0)}) + \eta(V_{P_1} - V'_{P_1})| \leq \delta\|V - V'\|_\infty\]
Thus $\mathcal{T}$ is a contraction with modulus $\delta < 1$ under the sup norm. The result follows from Banach's fixed point theorem. \qed

\medskip
\noindent\textbf{Proof of Proposition~\ref{ContinuationPunishmentProp}.}
Since $T(\rho, 1) = P_1$ for all $\rho$, we have $\Phi_{\rho} = V(T(\rho, 0)) - V_{P_1}$. Monotone rewards and the Bellman system \eqref{eq:BellmanSystem} imply $V_{P_1} \leq V_{P_2} \leq \dots \leq V_{P_n} \leq V_G$, so $\Phi_{\rho}$ is minimized when $T(\rho, 0)$ is smallest. For $n \geq 2$, $T(P_1, 0) = P_2$ and the minimum is $\Phi_{P_1} = V_{P_2} - V_{P_1}$; for $n = 1$, $T(P_1, 0) = G$ and $\Phi_{P_1} = V_G - V_{P_1}$. Evaluating the closed form below covers both cases.

To compute $\Phi_{P_1}$, define $D_j = V_{P_{j+1}} - V_{P_j}$ for $j < n$ and $D_n = V_G - V_{P_n}$. Subtracting the two Bellman equations yields $D_n = r_G - r_n$. For $j < n$: $D_j = (r_{j+1} - r_j) + \alpha D_{j+1}$. Unrolling the recursion from $j=1$ gives the result. \qed

\medskip
\noindent\textbf{Proof of Lemma~\ref{AuditFrequencyLemma}.}
With probability $1 - p$, no audit occurs and $z = 0$ regardless of the provider's action. With probability $p$, the audit proceeds with fail probability $\eta_1(x)$ under compliance or at least $\eta_1(x) + \varepsilon_{1, d}(x)$ under deviation. The unconditional fail probability is therefore $p \cdot \eta_1(x)$ under compliance and at least $p \cdot (\eta_1(x) + \varepsilon_{1, d}(x))$ under deviation. The detection gap is $p \cdot \varepsilon_{1, d}(x)$, and since $\Delta_d(x)$ is unaffected by audit frequency, the deterrence ratio scales as $\Gamma_1/p$. \qed

\medskip
\noindent\textbf{Proof of Proposition~\ref{FrontLoadingProp}.}
Each increment $\Delta_j$ enters \eqref{eq:BindingContinuationPunishment} with weight $\alpha^{j-1}$, which is strictly decreasing in $j$ since $\alpha \in (0,1)$. Subject to $\sum_{j=1}^n \Delta_j = r_G - r_1$ and $\Delta_j \geq 0$ (monotonicity), the weighted sum is maximized by placing the entire span on the largest weight: $\Delta_1 = r_G - r_1$ and $\Delta_j = 0$ for $j \geq 2$, i.e., $r_2 = \dots = r_n = r_G$, giving $\Phi_{P_1} = r_G - r_1$. \qed

\medskip
\noindent \textbf{Proof of Proposition \ref{OptimalProbationDepth}}
With increments $\Delta_j$, Proposition \ref{SteadyStateProp} gives $\bar{R} = r_1 + \sum_{j=1}^{n} (1 - \eta)^j \Delta_j$, and $\Phi_{P_1} = \sum_{j=1}^{n} \alpha^{j-1} \Delta_j$. The two-tier case has the same $S$, $r_1$, and continuation punishment, so the IC and participation constraints are identical. Reward cost $r_1 + (1 - \eta)\Phi_{P_1} = r_1 + \sum_j \delta^{j-1}(1 - \eta)^j \Delta_j \leq \bar{R}$ in every term since $\delta < 1$. \qed

\medskip
\noindent\textbf{Proof of Proposition~\ref{SteadyStateProp}.}
The balance equations give $\pi_{P_1} = \eta(\pi_G + \sum_{j=1}^n \pi_{P_j}) = \eta$ and $\pi_{P_j} = (1-\eta)\pi_{P_{j-1}}$ for $j \geq 2$, so $\pi_{P_j} = \eta(1-\eta)^{j-1}$. Then $\pi_G = 1 - \eta\sum_{j=1}^n (1-\eta)^{j-1} = 1 - (1-(1-\eta)^n) = (1-\eta)^n$. \qed

\medskip
\noindent\textbf{Proof of Proposition~\ref{BindingConstraintsProp}.}
By way of contradiction, suppose that IC is slack at some solution. Then $r_G$ can be reduced by some small increment without violating constraints, strictly reducing $\bar{R}$, which contradicts optimality. Suppose similarly that participation is slack at some solution. Then, $r_1$ can be reduced, strictly reducing $\bar{R}$, a contradiction. \qed

\section{Theoretical Foundation of the Detection Primitives} \label{app: foundationDetectionPrimitives}
\subsection{Signal Space and Manipulation Correspondence}

Let $\mathcal{S}$ denote a finite \textit{signal space}. The action and state jointly determine a distribution over the vector of private signals observed by the $k$ auditors assigned to provider $i$. Providers can strategically influence which distribution over signals is realized through a \textit{manipulation correspondence} $L$:
\begin{equation}
    L: \mathcal{X} \times \mathcal{A} \rightrightarrows \Delta(\mathcal{S}^k)
    \label{eq:ManipulationCorrespondence}
\end{equation}
where $L(x, a) \subseteq \Delta(\mathcal{S}^k)$ is the set of feasible joint distributions over the vector of $k$ auditor signals that provider $i$ can induce with action $a$ and state $x$. This formalism is taken from Milionis et al., where the feasible set of application-specific signals (reports) is denoted similarly.
\begin{assumption}[Manipulation Structure] \label{ManipulationStructureAssumption}
For each $x \in \mathcal{X}$, let $F_{a^*}(\cdot \mid x) \in \Delta(\mathcal{S}^k)$
denote the joint distribution over the vector of $k$ auditor signals under
compliance.
    \begin{enumerate}[label=(\alph*)]
        \item Under compliance, the signal distribution is a singleton:
        $L(x, a^*) = \{F_{a^*}(\cdot \mid x)\}$\footnote{Assumption 4(a) can be relaxed to let compliant providers optimize their signal presentation; separation (Definition~\ref{StateConditionalAuditSeparation}) then requires deviators to be caught more often even when both sides optimize, with the theory otherwise mildly changed. We adopt the singleton case throughout.}.
        \item For each $(x, a) \in \mathcal{X} \times \mathcal{A}$, the set
        $L(x, a)$ is nonempty, compact, and convex in $\Delta(\mathcal{S}^k)$.
        \item $F_{a^*}(\mathbf{s} \mid x) > 0$ for all $x \in \mathcal{X}$ and
        for all $\mathbf{s} \in \mathcal{S}^k$.
    \end{enumerate}
\end{assumption}
These assumptions are mild for real-world DePIN protocols: (a) compliant providers cannot manipulate their signal distribution; (b) optima over manipulations exist, and randomization over manipulations is absorbed without loss of generality; (c) the compliance distribution has full support, preventing trivial detection from signal vectors impossible under compliance. Under conditionally independent observation, $F_{a^*} = f_{a^*}^{\otimes k}$ for a per-auditor marginal and (c) follows from full support of the marginal; we do not impose product structure, since both the state and the manipulation correlate signals across auditors (e.g., one spoofed measurement broadcast to all witnesses).

\subsection{Audit Layer and Public Outcomes}\label{AuditorReports}
Auditors receive signals generated by the provider's action and state, and an incentive-compatible reporting mechanism, like the signal-recovery procedure for unverifiable information in the sense of \cite{milionis2025}, elicits reports from them. 

Conditional on the audit layer satisfying the truthfulness condition, the protocol aggregates reports and public randomness into a public outcome $z_t^i \in \mathcal{Z}$, where $\mathcal{Z}$ is a finite set with $|\mathcal{Z}| \geq 2$. The rest of the paper takes this induced public outcome mechanism as a given. In the baseline binary case $\mathcal{Z} = \{0, 1\}$, $z = 1$ denotes a 'fail' and $z=0$ denotes 'pass'. Formally, the aggregation rule is a mapping 
\begin{equation}
    \mathcal{G}: \mathcal{S}^k \times \Omega \to \mathcal{Z}
    \label{eq:AggregationRule}
\end{equation}
where $k$ is the number of auditor reports and $\Omega$ is a source of public randomness. With truthful reports $(s_1, \dots, s_k)$ and randomness $\omega$, the public outcome is $z = \mathcal{G}(s_1, \dots, s_k, \omega)$. Our analysis applies to the binary outcome structure $\mathcal{Z} = \{0, 1\}$.

For each state $x \in \mathcal{X}$, define the compliance-induced distribution over public outcomes:
\begin{definition}[State-Conditional Compliance Distribution]
For each $x \in \mathcal{X}$:
\[P_0^x(z) := \Pr\bigl(Z = z \mid a^*, x\bigr) \quad \text{for each } z \in \mathcal{Z}\]
where the signal vector is drawn as $\mathbf{s} = (s_1, \dots, s_k) \sim
F_{a^*}(\cdot \mid x)$ and $Z = \mathcal{G}(\mathbf{s}, \omega)$.
\end{definition}
Now, for each deviation $d \in \mathcal{A} \setminus \{a^* \}$, the provider selects
the manipulation $F_d \in L(x, d)$ that minimizes the probability of a fail
outcome. This worst-case scenario gives the following condition its adversarial
robustness.
\begin{definition}[State-Conditional Deviation Distribution]
For each $x \in \mathcal{X}$ and $d \in \mathcal{A} \setminus \{a^* \}$: \[P_d^x(1) \;=\; \inf_{F_d \in L(x,d)}\; \Pr\bigl(Z = 1 \mid F_d\bigr), \qquad P_d^x(0) \;=\; 1 - P_d^x(1)\]
where $\mathbf{s} \sim F_d$ and $Z = \mathcal{G}(\mathbf{s}, \omega)$.
\end{definition}

Note that the infimum is attainable as $L(x, d)$ is compact (Assumption~\ref{ManipulationStructureAssumption}(b)) and
$\Pr(Z = 1 \mid \cdot)$ is linear and continuous in the joint signal distribution. The separation condition is imposed after the audit layer. Source identifiability in \cite{milionis2025} concerns whether the mechanism can recover the information, while our $\varepsilon_d(x) > 0$ condition concerns whether the recovered information has enough statistical power to dynamically induce compliance. 

This framework recovers the primitives of section \ref{SeparationDetection} as $\eta(x) = P_0^x(1)$ and $P_d(x) = P_d^x(1)$.
\end{document}